\documentclass[submission,copyright,creativecommons]{eptcs}

\input{_.sty}

\newcommand{\TITLE}{Graphical CSS Code Transformation Using ZX Calculus}
\title{\TITLE}

\author{Jiaxin Huang\footnote{equal authorship}
\institute{Dept. of Computer Science\\Hong Kong University of \\ Science and Technology}
\email{jhuangbo@connect.ust.hk}
\and
Sarah Meng Li\footnotemark[1]
\institute{Institute for Quantum Computing (IQC)\\Dept. of Combinatorics \& Optimization (C\&O) \\ University of Waterloo}
\email{\quad sarah.li@uwaterloo.ca}
\and
Lia Yeh\footnotemark[1]
\institute{University of Oxford}
\institute{Quantinuum\\ 17 Beaumont Street\\Oxford OX1 2NA, UK}
\email{\quad lia.yeh@cs.ox.ac.uk}
\and
Aleks Kissinger
\institute{University of Oxford}
\email{aleks.kissinger@cs.ox.ac.uk}\!\!\!\!\!\!\!
\and
Michele Mosca
\institute{Perimeter Institute (PI)\\ IQC, C\&O, University of Waterloo}
\email{michele.mosca@uwaterloo.ca}\!\!\!\!\!
\and
Michael Vasmer
\institute{PI, IQC, University of Waterloo}
\email{mvasmer@perimeterinstitute.ca}}

\begin{document}
\maketitle

\begin{abstract}
In this work, we present a generic approach to transform CSS codes by building upon their equivalence to phase-free ZX diagrams. Using the ZX calculus, we demonstrate diagrammatic transformations between encoding maps associated with different codes. As a motivating example, we give explicit transformations between the Steane code and the quantum Reed-Muller code, since by switching between these two codes, one can obtain a fault-tolerant universal gate set. To this end, we propose a bidirectional rewrite rule to find a (not necessarily transversal) physical implementation for any logical ZX diagram in any CSS code.

We then focus on two code transformation techniques: \textit{code morphing}, a procedure that transforms a code while retaining its fault-tolerant gates, and \textit{gauge fixing}, where complimentary codes can be obtained from a common subsystem code (e.g., the Steane and the quantum Reed-Muller codes from the $\fifteensub$ code). We provide explicit graphical derivations for these techniques and show how ZX and graphical encoder maps relate several equivalent perspectives on these code transforming operations.

\end{abstract}

 \section{Introduction}
Quantum computation has demonstrated its potential in speeding up large-scale computational tasks~\cite{arute2019quantum,zhu2022quantum} and revolutionizing multidisciplinary fields such as drug discovery~\cite{cao2018potential}, climate prediction~\cite{tennie2022quantum}, chemistry simulation~\cite{leontica2021simulating}, and the quantum internet~\cite{fang2022quantum}. However, in a quantum system, qubits are sensitive to interference and information becomes degraded~\cite{nielsen2001quantum}. To this end, quantum error correction~\cite{shor1995nineqcode,steane1996multiple} and fault tolerance~\cite{GottesmanD1998ftqc,knill1997theory} have been developed to achieve large-scale universal quantum computation~\cite{gottesman2022opportunities}. 

Stabilizer theory~\cite{gottesman1997stabilizer} is a mathematical framework to describe and analyze properties of quantum error-correcting codes (QECC). It is based on the concept of stabilizer groups, which are groups of Pauli operators whose joint $+1$ eigenspace corresponds to the code space. Stabilizer codes are a specific type of QECC whose encoder can be efficiently simulated \cite{aaronson2004improved,Gottesman1998Heisenberg}. As a family of stabilizer codes, Calderbank-Shor-Steane (CSS) codes permit simple code constructions from classical codes \cite{calderbank1996good,calderbank1997quantum,steane1996multiple,steane1996simple}.

As a language for rigorous diagrammatic reasoning of quantum computation, the ZX calculus consists of ZX diagrams and a set of rewrite rules \cite{CoeckeB2011interacting,vandeWetering2020zxreview}.
It has been used to relate stabilizer theory to graphical normal forms: notably, efficient axiomatization of the stabilizer fragments for qubits~\cite{backensZXcalculusCompleteStabilizer2014,HuAT2022graphzx,mcelvanney2022complete}, qutrits~\cite{townsendteague2022simplification,wangQutritZXcalculusComplete2018}, and prime-dimensional qudits~\cite{boothCompleteZXcalculiStabiliser2022}. This has enabled various applications,  such as measurement-based quantum computation~\cite{mcelvanney2022complete,SimmonsW2021pauliflow}, quantum circuit optimization~\cite{ColeO2022stabpauliexp,gorard2021zxcalculus} and verification~\cite{lehmann2022vyzx}, as well as classical simulation~\cite{CodsiJ2022stabdecomp,kissinger2021simulating}. Beyond these, ZX-calculus has been applied to verify QECC~\cite{DuncanR2014verifying713,GarvieL2017verifying832}, represent Clifford encoders~\cite{KhesinAB2023cliffordencoderszx}, as well as study various QECC such as tripartite coherent parity check codes~\cite{CaretteT2019SZX,ChancellorN2016CPCZX} and surface codes~\cite{GidneyC2022surfacecodepentagons,GidneyC2019msdCCZto2T,GidneyC2019surfacecodeautoccz, ShawATE2022surgerysurfacecodebus}. Specific to CSS codes, ZX-calculus has been used to visualize their encoders~\cite{KissingerA2022phasefreeCSS}, code maps and code surgeries~\cite{cowtan2023css}, their correspondence to affine Lagrangian relations~\cite{comfortAlgebraStabilizerCodes2023}, and their constructions in high-dimensional quantum systems~\cite{cowtan2022qudit}.

In this paper, we seek to answer some overarching questions about QECC constructions and fault-tolerant implementations. We focus on CSS codes and leverage the direct correspondence between phase-free ZX diagrams and CSS code encoders~\cite{KissingerA2022phasefreeCSS}. Given an arbitrary CSS code, based on its normal form, we propose a bidirectional rewrite rule to find a (not necessarily transversal) physical implementation for any logical ZX diagram. Furthermore, we demonstrate diagrammatic transformations between encoding maps associated with different codes. Here, we focus on two code transformation techniques: \textit{code morphing}, a procedure that transforms a code while retaining its fault-tolerant gates~\cite{VasmerM2022morphingcodes}, and \textit{gauge fixing}, where complimentary codes (such as the Steane and the quantum Reed-Muller codes) can be obtained from a common subsystem code~\cite{anderson2014fault,paetznick2013univftqctransversal,QuanDX2018gaugefixRMconvert,vuillot2019code}. We provide explicit graphical derivations for these techniques and show how ZX and graphical encoder maps relate several equivalent perspectives on these code transforming operations.

The rest of this paper is organized as follows. In \cref{sec:pre}, we introduce notions and techniques used to graphically transform different CSS codes using the ZX calculus. In \cref{sec:graphicalconstr}, we generalize the ZX normal form for CSS stabilizer codes to CSS subsystem codes, and provide generic bidirectional rewrite rules for any CSS encoder. In \cref{sec:morphing}, we provide explicit graphical derivations for morphing the Steane and the quantum Reed-Muller codes. In \cref{sec:codeswitching}, we focus on the switching protocol between these two codes. Through ZX calculus, we provide a graphical interpretation of this protocol as gauge-fixing the $\fifteensub$ subsystem code, followed by syndrome-determined recovery operations. We conclude with \cref{sec:conclude}.

 \section{Preliminaries}
\label{sec:pre}
We start with some definitions. The Pauli matrices are $2\times 2$ unitary operators acting on a single qubit. Let $i$ be the imaginary unit.
\[
I =\begin{bmatrix}
    1 & 0\\
    0 & 1
\end{bmatrix}, \quad  X=\begin{bmatrix}
    0 & 1\\
    1 & 0
\end{bmatrix}, \quad Z = \begin{bmatrix}
    1 & 0 \\
    0 & -1
\end{bmatrix}, \quad Y = iXZ = \begin{bmatrix}
    0 & -i \\
    i & 0
\end{bmatrix}.
\]

Let $\mathcal{P}_1$ be the single-qubit Pauli group, $\mathcal{P}_1 = \bigl \langle i, X, Z\bigr \rangle$, $I, Y \in \mathcal{P}_1$. 

\begin{definition}
    Let $U \in \mathcal{U}(2)$. In a system over $n$ qubits, $1 \leq i \leq n$, 
    \[
    U_i = I \otimes \ldots \otimes I \otimes U \otimes I \otimes \ldots \otimes I
    \]
    denotes $U$ acting on the $i$-th qubit, and identity on all other qubits.
\end{definition}

Let  $\mathcal{P}_n$ be the $n$-qubit Pauli group. It consists of all tensor products of single-qubit Pauli operators. 
    \[
\mathcal{P}_n = \bigl \langle i, X_1, Z_1, \ldots, X_n, Z_n\bigr \rangle.
\]

The stabilizer formalism is a mathematical framework to describe and analyze the properties of certain QECC, called stabilizer codes~\cite{gottesman1997stabilizer,GottesmanD1998ftqc}. Consider $n$ qubits and let $m \leq n$. A stabilizer group $\mathcal{S} = \bigl \langle S_1, \ldots, S_m\bigr \rangle$ is an Abelian subgroup of $\mathcal{P}_n$ that does not contain $-I$. The codespace of the corresponding stabilizer code, $\mathcal C$, is the joint $+1$ eigenspace of $\mathcal{S}$, i.e.,
\[
\mathcal{C} = \{\ket{\psi} \in \C^{2^n}; \ S \ket{\psi} = \ket{\psi}, \forall S \in \mathcal{S}\}.
\]

The number of encoded qubits in a stabilizer code is $k = n - m$, where $m$ is the number of independent stabilizer generators~\cite{gottesman1997stabilizer}. Moreover, we can define the \emph{centralizer} of $\mathcal{S}$ as
\[
\mathcal{N}(\mathcal{S})= \{U \in \mathcal{P}_n; \ [U,S] = 0, \forall S \in \
\mathcal{S}\}.
\]

One can check that $\mathcal{N}(\mathcal{S})$ is a subgroup of $\mathcal{P}_n$ and $\mathcal{S} \subset \mathcal{N}(\mathcal{S})$. 
We remark that the notions of normalizer and centralizer coincide for any stabilizer group. In what follows, we will use them interchangeably.
As we will see later, $\mathcal{N}(\mathcal{S})$ provides an algebraic structure for the subsystem codes. The code distance, $d$, of a stabilizer code is the minimal weight of operators in $\mathcal N(\mathcal S) / \langle iI \rangle$ that is not in $\mathcal S$.
We summarize the properties of a stabilizer code with the shorthand $\nkdcode$.

Finally, we introduce some notation for subsets of $n$-qubit Pauli operators, which will prove useful for defining CSS codes.

\begin{definition}
\label{def:notion}
Let $M$ be an $m\times n$ binary matrix and $P\in \mathcal{P}_1/\langle iI \rangle$. In the stabilizer formalism, M is called the \emph{stabilizer matrix}, and $M^P$ defines $m$ P-type stabilizer generators.
$$M^P\coloneqq \Biggl\{\bigotimes_{j=1}^n P^{[M]_{ij}}; \ 1\leq i\leq m\Biggr\}.$$ 
\end{definition}

CSS codes are QECC whose stabilizers are defined by two orthogonal binary matrices $G$ and $H$ \cite{calderbank1996good,steane1996multiple}:
\[
\mathcal{S}=\langle G^X, H^Z\rangle, \quad GH^\intercal=\mathbf{0},
\]
$H^\intercal$ is the transpose of $H$. This means that the stabilizer generators of a CSS code can be divided into two types: X-type and Z-type. For example, the $\steanecode$ Steane code~\cite{steane1996multiple} in \Cref{fig:steane} is specified by 
\begin{equation}
 G = H = \begin{bmatrix}
    1 & 0 & 1 & 0 & 1 & 0 & 1 \\
    0 & 1 & 1 & 0 & 0 & 1 & 1 \\
    0 & 0 & 0 & 1 & 1 & 1 & 1
  \end{bmatrix}_{3\times 7}.
  \label{eq:1}
\end{equation}  
Accordingly, the X-type and Z-type stabilizers are defined as 
\[
S^X_1 = X_1X_3X_5X_7, \; S^X_2 = X_2X_3X_6X_7, \; S^X_3 = X_4X_5X_6X_7, \; S^Z_1 = Z_1Z_3Z_5Z_7, \; S^Z_2 = Z_2Z_3Z_6Z_7, \; S^Z_3 = Z_4Z_5Z_6Z_7.
\]
The logical operators $\overline{X}$ and $\overline{Z}$ are defined as

\begin{equation}
\label{eq:logicalop}
\overline{X} = X_1X_4X_5 \qquad \text{and} \qquad \overline{Z} = Z_1Z_4Z_5.
\end{equation}

In \cref{subsec:subsysstemcodes}, we define CSS subsystem codes. In \cref{subsec:qrm}, we define several CSS codes that will be used in subsequent sections. In \cref{subsec:zx}, we introduce the basics of the ZX calculus and the phase-free ZX normal forms.

\subsection{CSS Subsystem Codes}
\label{subsec:subsysstemcodes}

Subsystem codes~\cite{kribs2005unified,poulin2005stabilizer} are QECC where some of the logical qubits are not used for information storage and processing. These logical qubits are called gauge qubits. By fixing gauge qubits to some specific states, the same subsystem code may exhibit different properties, for instance, having different sets of transversal gates~\cite{bombinGaugeColorCodes2015,kubicaUniversalTransversalGates2015,kubicaSingleshotQuantumError2022,paetznick2013univftqctransversal,yoderUniversalFaulttolerantQuantum2017}. This provides a tool to circumvent restrictions on transversal gates such as the Eastin-Knill theorem~\cite{eastin2009restrictions}. 

Based on the construction proposed in \cite{poulin2005stabilizer}, we describe a subsystem code using the stabilizer formalism. 

\begin{definition}
\label{def:subsystem}
    Given a stabilizer group $\mathcal{S}$, a gauge group $\mathcal{G}$ is a normal subgroup of $\mathcal{N}(\mathcal{S})$, such that $\mathcal{S} \subset \mathcal{G}$ and that $\mathcal{G}/\mathcal{S}$ contains anticommuting Pauli pairs. In other words, one can write 
    \[
    \mathcal{S} = \bigl \langle S_1, \ldots, S_m \bigr \rangle, \quad \mathcal{G} = \bigl \langle S_1, \ldots, S_m, g^X_1, g^Z_1, \ldots, g^X_r, g^Z_r\bigr \rangle, \quad 1 \leq m+r \leq n.
    \]
    $(\mathcal{S},\mathcal{G})$ defines an $\nkrdcode$ subsystem code where $n = m + k + r$. The logical operators are elements of the quotient group $\mathcal{L} = \mathcal{N}(\mathcal{S})/\mathcal{G}$.
\end{definition}

Under this construction, $n$ physical qubits are used to encode $k$ logical qubits with $r$ gauge qubits. Alternatively, we can think of the gauge group $\mathcal{G}$ as partitioning the code space $\mathcal{C}$ into two subsystems: $\mathcal{C} = \mathcal{A} \otimes \mathcal{B}$. Logical information is encoded in $\mathcal{A}$ and $\mathcal{L}$ serves as the group of logical operations. Gauge operators from $\mathcal{G}$ act trivially on subsystem $\mathcal{A}$, while operators from $\mathcal{L}$ act trivially on subsystem $\mathcal{B}$. 
Therefore, two states $\rho^\mathcal{A} \otimes \rho^\mathcal{B}$ and ${\rho'}^{\mathcal{A}} \otimes {\rho'}^{\mathcal{B}}$ are considered equivalent if $\rho^{\mathcal A} = {\rho'}^{\mathcal A}$, regardless of the states $\rho^\mathcal{B}$ and ${\rho'}^\mathcal{B}$. When $r = 0$, $\mathcal{G}=\mathcal{S}$. 
In that case, an $\nkzerodcode$ subsystem code is essentially an $\nkdcode$ stabilizer code.

CSS subsystem codes are subsystem codes whose stabilizer generators can be divided into X-type and Z-type operators. In what follows, we provide an example to illustrate their construction.

\subsection{Some Interesting CSS Codes}
\label{subsec:qrm}
We start by defining the stabilizer groups for the $\steanecode$ Steane code, the $\fifteencode$ extended Steane code~\cite{anderson2014fault}, and the $\fifteencode$ quantum Reed-Muller code~\cite{knill1996threshold}. They are derived from the family of $\qrmm$ quantum Reed-Muller codes, with a recursive construction of stabilizer matrices~\cite{steane1999quantum}. The Steane code has transversal logical Clifford operators, and the quantum Reed-Muller code has a transversal logical T gate. Together these operators form a universal set of fault-tolerant gates. In \Cref{sec:codeswitching}, the relations between these codes are studied from a diagrammatic perspective.

For brevity, their corresponding stabilizer groups are denoted as $\mathcal{S}_{steane}$, $\mathcal{S}_{ex}$, and $\mathcal{S}_{qrm}$. As per \Cref{def:notion}, consider three stabilizer matrices $F$, $H$, and $J$. Note that $G$ is defined in \Cref{eq:1}. $\mathbf{0}$ and $\mathbf{1}$ denote blocks of $0$s' and $1$s' respectively. Their dimensions can be inferred from the context. 

\begin{gather*}
  F = \left[ {\begin{array}{ccc}
    G & 0 & G  \\
    \mathbf{0} & 1 & \mathbf{1}
  \end{array} } \right]_{4\times 15},\quad
  H = \left[ {\begin{array}{cc}
    G & \mathbf{0}
  \end{array} } \right]_{3\times 15}, \\
  \vspace{1 em}
  J = \left[ {\begin{array}{ccccccccccccccc}
    1 & 0 & 1 & 0 & 0 & 0 & 0 & 0 & 1 & 0 & 1 & 0 & 0 & 0 & 0 \\
    0 & 1 & 1 & 0 & 0 & 0 & 0 & 0 & 0 & 1 & 1 & 0 & 0 & 0 & 0 \\
    0 & 0 & 1 & 0 & 0 & 0 & 1 & 0 & 0 & 0 & 1 & 0 & 0 & 0 & 1
  \end{array} } \right]_{3\times 15}.
\end{gather*}
Then, the stabilizer groups are defined as
\begin{equation}
\label{eq:stab-qrm}
    \mathcal{S}_{steane} = \bigl \langle G^X, G^Z \bigr\rangle, \quad \mathcal{S}_{ex} = \bigl \langle F^X, F^Z, H^X, H^Z \bigr\rangle, \quad \mathcal{S}_{qrm} = \bigl \langle F^X, F^Z, H^Z, J^Z \bigr\rangle.
\end{equation}

Geometrically, one can define $\mathcal{S}_{steane}$ and $\mathcal{S}_{qrm}$ with the aid of \Cref{fig:srm}. In \Cref{fig:steane}, the Steane code is visualized on a 2D lattice. Since the Steane code is self-dual, every coloured face corresponds to an X-type and Z-type stabilizer. In \Cref{fig:reed-muller}, the quantum Reed-Muller code is visualized on a 3D lattice. Every coloured face corresponds to a weight-$4$ Z-type stabilizer. Every coloured cell corresponds to a weight-$8$ X-type and Z-type stabilizer respectively. For the Steane code, the logical operators defined in \Cref{eq:logicalop} correspond to an edge in the triangle. For the quantum Reed-Muller code, the logical X operator corresponds to a weight-$7$ triangular face, and the logical Z operator corresponds to a weight-$3$ edge of the entire tetrahedron. An example is shown below.
\begin{equation}
    \label{eq:logicalqrm}
    \overline{X} = X_1 X_2 X_3 X_4 X_5 X_6 X_7 \qquad \text{and} \qquad \overline{Z} = Z_1 Z_4 Z_5
\end{equation}

Given such representations, the Steane code and the quantum Reed-Muller code are also special cases of colour codes \cite{bombin2006distillation,bombin2007nobraiding,kubicaUniversalTransversalGates2015}.

\begin{figure}[!ht]
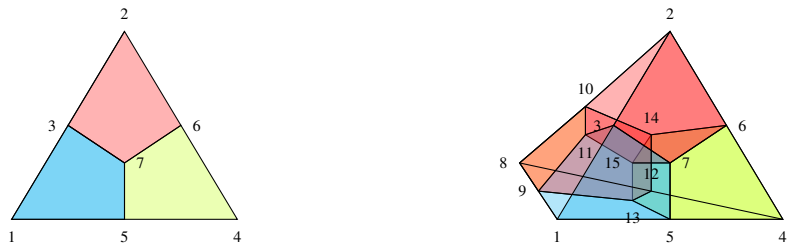

    \centering
    \begin{subfigure}{0.35\textwidth}
         \ctikzfig{figs/conv/steane}
         \caption{The Steane code as a 2D colour code.}
         \label{fig:steane}
     \end{subfigure}
    \begin{subfigure}{0.5\textwidth}
         \ctikzfig{figs/conv/qrm}
         \caption{The quantum Reed-Muller code as a 3D colour code.}
         \label{fig:reed-muller}
     \end{subfigure}
     \caption{Each vertex represents a physical qubit. Each edge serves as an aid to the eye. They do not imply any physical interactions or inherent structures.}
     \label{fig:srm}
\end{figure}

From \Cref{eq:stab-qrm}, the extended Steane code is self-dual, and its encoded state is characterized by the lemma below. It shows that $\mathcal{S}_{ex}$ and $\mathcal{S}_{steane}$ are equivalent up to some auxiliary state.

\begin{lemma}[\cite{anderson2014fault}]
\label{lem:encodedstate}
Any codeword $\ket{\psi}$ of the extended Steane code can be decomposed into a codeword $\ket{\phi}$ of the Steane code and a fixed state $\ket{\eta}$. That is,

\[
\ket{\psi}=\ket{\phi}\otimes\ket{\eta},
\]
where $\ket{\eta}=\frac{1}{\sqrt{2}}(\ket{0}\ket{\overline{0}}+\ket{1}\ket{\overline{1}})$, $\ket{\overline{0}}$ and $\ket{\overline{1}}$ are the logical 0 and 1 encoded in the Steane code.
\end{lemma}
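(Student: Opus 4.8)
The plan is to prove the statement entirely within the stabilizer formalism, by showing that the full stabilizer group $\mathcal{S}_{ex}$ factorizes across the bipartition given by qubits $1$--$7$ (the Steane block carrying $\ket{\phi}$) on one side and qubits $8$--$15$ (carrying $\ket{\eta}$) on the other. Concretely, I would first determine the stabilizer group of the target state $\ket{\eta}=\tfrac{1}{\sqrt2}(\ket{0}\ket{\overline 0}+\ket{1}\ket{\overline 1})$ on qubits $8$--$15$. A direct check shows that $\ket{\eta}$ is fixed by the second-block Steane generators $G^X$ and $G^Z$ (which stabilize both $\ket{\overline 0}$ and $\ket{\overline 1}$), by $X_8\overline{X}$ (which swaps the two branches), and by $Z_8\overline{Z}$ (which fixes each branch with eigenvalue $+1$), yielding $8$ independent commuting generators on $8$ qubits, hence a unique state.

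The crux is then to rewrite the $14$ generators of $\mathcal{S}_{ex}=\langle F^X,F^Z,H^X,H^Z\rangle$ into this separated form. Since $H=[\,G\ \mathbf 0\,]$, the generators $H^X,H^Z$ are exactly the first-block Steane stabilizers, acting trivially on qubits $8$--$15$; these supply the $\ket{\phi}$ side. Multiplying the first three rows of $F$ by the corresponding rows of $H$ cancels the first-block support of $[\,G\ 0\ G\,]$ and leaves precisely the second-block Steane stabilizers. The genuinely code-specific step is the fourth row of $F$, namely $X_8 X_9\cdots X_{15}$ and $Z_8 Z_9\cdots Z_{15}$. Here I would use the combinatorial fact that in the (self-dual) Steane code the all-ones operator equals the logical operator up to a stabilizer, e.g.\ $X_1\cdots X_7=\overline{X}\,S^X_2$; applied to the second block this rewrites the fourth row as $X_8\overline{X}$ and $Z_8\overline{Z}$, modulo the already-available second-block stabilizers.

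Having expressed $\mathcal{S}_{ex}$ as the disjoint union of $\{G^X,G^Z\}$ on qubits $1$--$7$ and the eight generators of $\ket{\eta}$ on qubits $8$--$15$, the codespace is forced to factorize as $\mathcal{C}_{steane}\otimes\mathrm{span}\{\ket{\eta}\}$; a dimension count ($2\times 1$, matching $k=1$ of the $\fifteencode$ code) confirms there is no further freedom, so every codeword $\ket{\psi}$ has the claimed form $\ket{\phi}\otimes\ket{\eta}$ with $\ket{\phi}$ a Steane codeword. I expect the main work to lie in the logical-operator rewriting of $F$'s fourth row, since it is the only place where the specific self-dual combinatorics of $G$ (rather than generic block structure) is used; an equivalent and perhaps slicker route would be to run the same argument diagrammatically, pushing the phase-free ZX spiders of the second block through to expose the Bell-type state $\ket{\eta}$ directly.
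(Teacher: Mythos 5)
The first thing to note is that the paper does not prove \cref{lem:encodedstate} at all: it is imported from~\cite{anderson2014fault} and used as a black box (the paper's only nearby derivation, the proof of \cref{lem:reduce}, concerns the subsystem code $\csub$, which is a different statement). So your proposal cannot match or diverge from an in-paper proof; it has to stand on its own --- and it does. All three load-bearing steps check out. (i) The eight operators you place on qubits $8$--$15$ --- the second-block copies of $G^X, G^Z$ together with $X_8\overline{X}$ and $Z_8\overline{Z}$ --- are independent, mutually commuting (the pair $X_8\overline{X}$, $Z_8\overline{Z}$ picks up two compensating signs), and each fixes $\ket{\eta}$, so their joint $+1$ eigenspace is exactly $\mathrm{span}\{\ket{\eta}\}$. (ii) Your regrouping of $\mathcal{S}_{ex}$ uses only generator multiplications: $F_iH_i$ has support $[\,\mathbf{0}\ \ 0\ \ G_i\,]$ for $i=1,2,3$, and the one code-specific identity you invoke, $X_1\cdots X_7 = \overline{X}\,\SX{2}$, is immediate from $\overline{X}=X_1X_4X_5$ in \cref{eq:logicalop} and $\SX{2}=X_2X_3X_6X_7$; self-duality of $G$ gives the $Z$-type counterpart for free, so the fourth row of $F$ indeed reduces to $X_8\overline{X}$ and $Z_8\overline{Z}$ modulo second-block stabilizers. (iii) Once every generator is supported wholly inside qubits $1$--$7$ or wholly inside qubits $8$--$15$, the codespace projector factorizes into a tensor product of the two partial projectors, forcing $\cex = \csteane\otimes\mathrm{span}\{\ket{\eta}\}$, and the dimension count $2 = 2\times 1$ leaves no further freedom.

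As for what the two routes buy: your argument is pure stabilizer algebra, elementary and well suited to this particular claim, which is a statement about state decomposition rather than about encoder maps. The paper's own methodology is diagrammatic, and your closing remark --- pushing the phase-free spiders of the second block through to expose the Bell-type state --- is precisely the style of the paper's proof of \cref{lem:reduce}: one would represent $\eex$ in a normal form per \cref{def:normalform} and split it into the normal form of the Steane encoder tensored with a state-preparation diagram for $\ket{\eta}$. That version would integrate more smoothly with the rest of the paper (it is essentially the $\ket{\overline{+++}}$-fixing argument run one step further), but it proves nothing more; the stabilizer factorization you give is a complete and correct proof of the lemma as stated.
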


Since the logical information $\ket{\phi}$ encoded in the Steane code is not entangled with $\ket{\eta}$, to switch between the Steane code and the extended Steane code, one may simply add or discard the auxiliary state $\ket{\eta}$. This property will prove useful in \Cref{sec:codeswitching}.

Next, we define the $\fifteensub$ CSS subsystem code~\cite{vuillot2019code}. As per \Cref{def:subsystem}, let $\mathcal{S}_{sub}$ and $\mathcal{G}$ be its stabilizer group and gauge group respectively.
\begin{equation}
\label{eq:stab-sub}
     \mathcal{S}_{sub} = \bigl \langle F^X, F^Z, H^Z \bigr\rangle, \quad \mathcal{G} = \bigl\langle F^X, F^Z, H^X, H^Z, J^Z \bigr\rangle.
\end{equation}

Let $\mathcal{L}_g = \mathcal{G}/\mathcal{S}$ and $\mathcal{L} = \mathcal{N}(\mathcal{S})/\mathcal{G}$. One can verify that 
\begin{equation}
    \mathcal{L}_g =  \bigl \langle H^X, J^Z \bigr\rangle, \quad \mathcal{L} = \bigl \langle \overline{X}, \overline{Z} \bigr\rangle.
\end{equation}
Thus, the CSS subsystem code has one logical qubit and three gauge qubits, and they are acted on by $\mathcal{L}$ and $\mathcal{L}_g$ respectively. From \Cref{sec:graphicalconstr} onwards, we call operators in $\mathcal{L}_g$ as \emph{gauge operators}.

Moreover, $\mathcal{S}_{sub}$ can be viewed as the stabilizer group of a $\intcode$ CSS code, with logical operators $\mathcal{L}'$. This code appears in an intermediary step of the gauge fixing process in \cref{sec:codeswitching}.
\begin{equation}
\mathcal{L}' \coloneqq\mathcal{L}_g\cup\mathcal{L} = \bigl \langle H^X, J^Z, \overline{X}, \overline{Z}\bigr\rangle.
\end{equation}

\subsection{ZX Calculus}
\label{subsec:zx}

The qubit ZX-calculus~\cite{CoeckeB2008ZX,CoeckeB2011interacting,PQP,vandeWetering2020zxreview} is a quantum graphical calculus for diagrammatic reasoning of any qubit quantum computation.  Every diagram in the calculus is composed of two types of generators: Z spiders, which sum over the eigenbasis of the Pauli Z operator:
\begin{equation}
    \tikzfig{figs/prelim/Zsp-phase} \ \coloneqq \ \ket{0}^{\otimes n}\bra{0}^{\otimes m} \ +\  e^{i \alpha} \ket{1}^{\otimes n}\bra{1}^{\otimes m},\label{zspider}
\end{equation}
and X spiders, which sum over the eigenbasis of the Pauli X operator:
\begin{equation}
    \tikzfig{figs/prelim/Xsp-phase} \ \coloneqq \ \ket{+}^{\otimes n}\bra{+}^{\otimes m} \ +\  e^{i \alpha} \ket{-}^{\otimes n}\bra{-}^{\otimes m}.\label{xspider}
\end{equation}

The ZX-calculus is \emph{universal}~\cite{CoeckeB2011interacting} in the sense that any linear map from $m$
qubits to $n$ qubits corresponds exactly to a ZX diagram, by the construction of \cref{zspider,xspider} and the composition of linear maps.

Furthermore, the ZX-calculus is \emph{complete}~\cite{HadzihasanovicA2018zxcomplete,JeandelE2020completeZX}: Any equality of linear maps on any number of qubits derivable in the Hilbert space formalism, is derivable using only a finite set of rules in the calculus. The smallest complete rule set to date~\cite{VilmartR2019nearminzx} is shown in~\cref{fig:zxrules}. Some additional rules, despite being derivable from this rule set, will be convenient to use in this paper. They are summarized in \cref{fig:zxadditionalrules}.

\begin{figure}
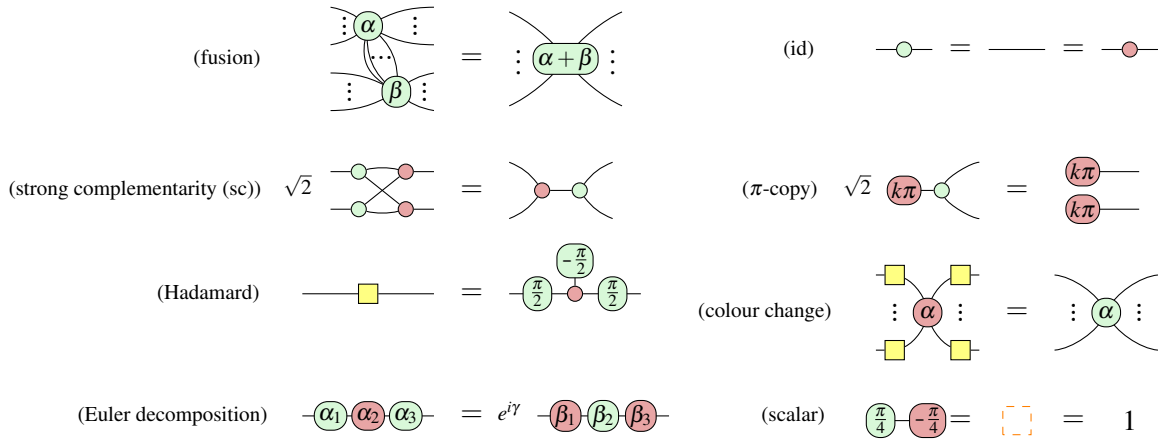

    \centering
    \ctikzfig{figs/prelim/zxrules}
    \caption{These eight equations suffice to derive all other equalities of linear maps on qubits~\cite{VilmartR2019nearminzx}. $k \in \Z_2$. $\alpha_i$, $\beta_i$ and $\gamma$ are real numbers satisfying the trigonometric relations derived in~\cite{CoeckeWang2018deriveeuler}. Each equation still holds when we replace all spiders with their corresponding spiders of the opposite colour. Whenever there are any two wires with $...$ between them, the rule holds when replacing this with any number of wires (i.e., $0$ or greater).}
    \label{fig:zxrules}
\end{figure}

When a spider has phase zero, we omit its phase in the diagram, as shown below. A ZX diagram is \textit{phase-free} if all of its spiders have zero phases. For more discussions on phase-free ZX diagrams, we refer readers to consult \cite{KissingerA2022phasefreeCSS}.
\[
    \scalebox{.9}{\tikzfig{figs/prelim/phase-free}}
\]

\vspace{-2 em}

Due to the universality of the ZX calculus, quantum error-correcting code encoders, as linear isometries, can be drawn as ZX diagrams \cite{KhesinAB2023cliffordencoderszx}. Moreover, the encoder for a CSS code corresponds exactly to the phase-free \textit{ZX (and XZ) normal form}~\cite{KissingerA2022phasefreeCSS}. 
\begin{definition}
    \label{def:normalform}
    For a CSS stabilizer code defined by $\mathcal{S}$, let $\bigl\{\SX{i}; 1 \leq i \leq m\bigr\} \subset \mathcal{S}$ be the X-type stabilizer generators and $\bigl\{\overline{X_j}; 1 \leq j \leq k\bigr\}$ be the logical X operators, $m+k < n$. Its ZX normal form can be found via the following steps:
\begin{enumerate}[label={(\alph*)}]
        \item For each physical qubit, introduce an $X$ spider.
        \item For each X-type stabilizer generator $\SX{i}$ and logical operator $\overline{X_j}$, introduce a $Z$ spider and connect it to all X spiders where this operator has support.
        \item Give each $X$ spider an output wire.
        \item For each $Z$ spider representing $\overline{X_j}$, give it an input wire.
\end{enumerate}
\end{definition}

As an example, the ZX normal form for the Steane code is drawn in \cref{fig:qrm3encoder}. The XZ normal form can be constructed based on Z-type stabilizer generators and logical Z operators by inverting the roles of X and Z spiders in the above procedure. In \cite{KissingerA2022phasefreeCSS}, Kissinger gave an algorithm to rewrite any phase-free ZX diagram into both the ZX and XZ normal forms, and pointed out that it is sufficient to represent a CSS code encoder using either one of the forms.

 \begin{figure}[!ht]
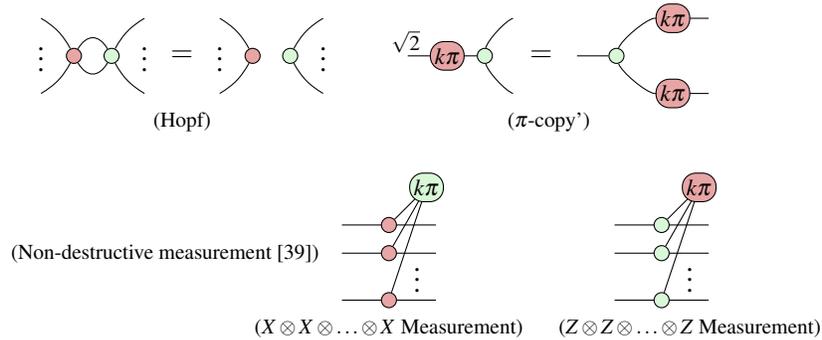

    \centering
        \scalebox{1}{\tikzfig{figs/prelim/zxrules-others}}
        \caption{Some other useful rewrite rules, each derivable from the rules in Figure~\ref{fig:zxrules}. $k \in \Z_2$. Each equation still holds when we interchange X and Z spiders.}
        \label{fig:zxadditionalrules}
\end{figure}

 \begin{figure}[!ht]
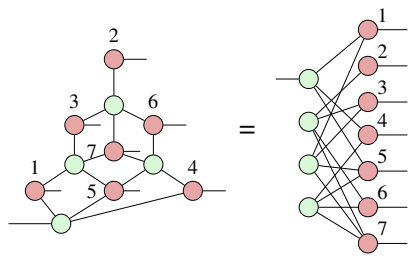

        \centering
        \scalebox{.7}{\tikzfig{figs/conv/steaneencoder}}\ = \ \scalebox{.7}{\tikzfig{figs/conv/steaneencoder2}} 
        \caption{The Steane code encoder in the ZX normal form.}
        \label{fig:qrm3encoder}
\end{figure}

  \section{Graphical Construction of CSS Encoders}
\label{sec:graphicalconstr}
\subsection{ZX Normal Forms for CSS Subsystem Codes}
\label{subsec:zxcsssub}
We generalize the ZX normal form for CSS stabilizer codes to CSS subsystem codes as follows.

\begin{definition}
    \label{def:subsystemnormalform}
    For an $\nkrdcode$ CSS subsystem code defined by $(\mathcal{S},\mathcal{G})$, let $\bigl\{\SX{i}; 1 \leq i \leq m\bigr\}$ be the X-type stabilizer generators, $\bigl\{\LGX{t}; 1 \leq t \leq r\bigr\}$ be the X-type gauge operators, and $\bigl\{\overline{X_j}; 1 \leq j \leq k\bigr\}$ be the logical X operators, $m+k+r < n$. Its ZX normal form can be found via the following steps:

    \begin{enumerate}[label={(\alph*)}]
        \item For each physical qubit, introduce an $X$ spider.
        \item For each stabilizer generator $\SX{i}$, logical operator $\overline{X_j}$ and gauge operator $\LGX{t}$, introduce a $Z$ spider and connect it to all X spiders where this operator has support.
        \item Give each $X$ spider an output wire.
        \item For each $Z$ spider representing $\overline{X_j}$, give it an input wire.
        \item For all $Z$ spiders representing $\LGX{t}$, attach to them a joint arbitrary input state (i.e., a density operator $\rho$).
    \end{enumerate}

\end{definition}
Similar to CSS stabilizer codes, CSS subsystem codes also have an equivalent XZ normal form, which can be found by inverting the role of Z and X in the above procedure.

For $n > 3$, below we exemplify the ZX normal form for an $\examplesub$ CSS subsystem code with three X-type stabilizers generators $\bigl\{\SX{1},\SX{2},\SX{3}\bigr\}$, two X-type gauge operators $\bigl\{\LGX{1},\LGX{2}\bigr\}$, and one logical operator $\bigl\{\overline{X}\bigr\}$. For simplicity, we substitute wires connecting Z and X spiders by $\tikzfig{figs/fab/dots}$. The detailed connectivities are omitted here, but they should be clear following step (b) in \cref{def:subsystemnormalform}. This notation will be used in the remainder of this paper.
$$\scalebox{0.8}{\tikzfig{figs/prelim/subsystem}}.$$

\subsection{Pushing through the Encoder}
For any $\nkdcode$ CSS code, its encoder map $E$ is of the form:
$$k \big\{ \ \tikzfig{figs/push/encoder} \ \bigg\} n.$$

\begin{definition}
\label{def:logphyop}
    Let $\overline{X_i}$ and $\overline{Z_i}$ be the X and Z operators acting on the $i$-th logical qubit. Let $\overline{\mathcal{X}_i}$ and $\overline{\mathcal{Z}_i}$ be the physical implementation of $\overline{X_i}$ and $\overline{Z_i}$ respectively. Diagrammatically, they can be represented as $$\tikzfig{figs/fab/push}$$
\end{definition}

\vspace{-4 em}

In other words, pushing $\overline{X_i} \; (\text{or} \; \overline{Z_i})$ through $E$ yields $\overline{\mathcal{X}_i}$ $(\text{or}\; \overline{\mathcal{Z}_i})$. Using ZX rewrite rules along with the ZX (or XZ) normal form, we can prove the following lemma.
\begin{lemma}
For any CSS code, all $\overline{X_i}$ and $\overline{Z_i}$ are implementable by multiple single-qubit Pauli operators. In other words, all CSS codes have \emph{transversal} $\overline{X_i}$ and $\overline{Z_i}$.
\end{lemma}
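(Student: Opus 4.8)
The plan is to push each logical Pauli through the encoder $E$ using the spider-fusion and $\pi$-copy rules, working in whichever normal form turns the chosen Pauli into spiders that interact cleanly with $E$. Recall that in the ZX calculus the Pauli $X$ is a single $X$ (red) spider carrying phase $\pi$ and the Pauli $Z$ is a single $Z$ (green) spider carrying phase $\pi$. Thus $\overline{X_i}$ is a red $\pi$ on the $i$-th input wire and $\overline{Z_i}$ is a green $\pi$ on the $i$-th input wire, and by \cref{def:logphyop} I want to slide these past $E$ onto the output side, producing $\overline{\mathcal{X}_i}$ and $\overline{\mathcal{Z}_i}$.

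First I would treat $\overline{X_i}$ in the ZX normal form of \cref{def:normalform}. There the $i$-th input wire is attached to the $Z$ spider representing $\overline{X_i}$, which in turn is connected to exactly the $X$ (output) spiders on whose qubits $\overline{X_i}$ has support. Placing the red $\pi$ on that input wire, I apply the $\pi$-copy rule from \cref{fig:zxrules} (a $\pi$ of one colour commutes through a spider of the opposite colour, copying onto all remaining legs; since this $Z$ spider is phase-free, its phase is unchanged). This deposits a red $\pi$ on every leg running from the logical $Z$ spider to the supported output spiders. Each such red $\pi$ fuses into its same-colour $X$ spider, and I then unfuse it back out along that spider's single output wire, leaving the phase-free spider intact and a bare red $\pi$ — i.e.\ a physical $X$ — on the corresponding output qubit. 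The net effect is $E\,\overline{X_i} = \bigl(\bigotimes_{a} X_a\bigr)\,E$, a tensor product of single-qubit $X$ operators, which is precisely transversality of $\overline{X_i}$.

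For $\overline{Z_i}$ the same manipulation inside the ZX normal form is awkward: a green $\pi$ entering the green logical spider merely fuses, and must then be copied through several $X$ spiders, spilling $\pi$'s onto all the other stabiliser and logical spiders attached to those qubits rather than collecting on the outputs. The clean fix is to invoke the XZ normal form. Since~\cite{KissingerA2022phasefreeCSS} shows that the ZX and XZ normal forms describe the same CSS encoder, I may represent $E$ in its XZ form, obtained from the ZX construction by interchanging the roles of $X$ and $Z$ (red and green). The argument of the previous paragraph then runs verbatim with the colours swapped: the green $\pi$ of $\overline{Z_i}$ copies through the relevant $X$ spider onto the supported output legs, fuses into the output $Z$ spiders, and unfuses as bare green $\pi$'s, giving $E\,\overline{Z_i} = \bigl(\bigotimes_{b} Z_b\bigr)\,E$. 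Hence both $\overline{X_i}$ and $\overline{Z_i}$ are implemented by products of single-qubit Paulis.

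I expect the main obstacle to be exactly this asymmetric treatment of $Z$ versus $X$ within a single fixed normal form: a naive push of $\overline{Z_i}$ through the ZX normal form does not terminate in a transversal operator because the copied $\pi$'s scatter across the other spiders. The idea that resolves it is that I am free to choose the normal form per operator — ZX for the $X$-type logicals and XZ for the $Z$-type logicals — appealing to their provable equivalence as encoders. Beyond that, the only points needing care are the bookkeeping of the nonzero scalar factors introduced by the phase-free rewrites (which do not affect the operator identity) and verifying that each logical spider's phase stays $0$ under $\pi$-copy, which holds precisely because the normal form is phase-free.
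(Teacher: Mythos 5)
Your proof is correct and takes essentially the same approach as the paper: the paper's proof also represents $E$ in the ZX normal form and applies the $\pi$-copy rule to each $\overline{X_i}$, handling the $\overline{Z_i}$ case ``without loss of generality'' via the symmetric argument in the XZ normal form --- precisely the colour-swap you make explicit. Your extra bookkeeping (copying the red $\pi$ through the phase-free logical $Z$ spider, then fusing and unfusing onto the output wires) is a faithful, more detailed expansion of the paper's terser argument.
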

\begin{proof}
Consider an arbitrary CSS code. Without loss of generality, represent its encoder $E$ in the ZX normal form following \Cref{def:normalform}. Then proceed by applying the $\pi$-copy' rule on every $\overline{X_i}$ (the X spider with a phase $\pi$ on the left-hand side of the encoder $E$).
\end{proof}
Below we illustrate the proof using the $\squarecode$ code as an example. 

\begin{example}
For the $\squarecode$ code, $\overline{X_1} = X_1 X_2$.
\[
    \tikzfig{figs/fab/422X}.
\]
\end{example}

Beyond just X or Z spiders, one can push \emph{any} ZX diagram acting on the logical qubits through the encoder. Such pushing is bidirectional, and the left-to-right direction is interpreted as finding a physical implementation for a given logical operator.
\begin{proposition}
\label{prop:pushenc}
    Let $E$ be the encoder of a CSS code. For any ZX diagram $L$ on the left-hand side of $E$, one can write down a corresponding ZX diagram $P$ on the right-hand side of $E$, such that $EL=PE$. In other words, $P$ is a valid physical implementation of $L$ for that CSS code.
\label{lem:rewrite}
\end{proposition}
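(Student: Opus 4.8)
The plan is to establish the identity $EL = PE$ by exploiting the fact that $E$ is a linear isometry (an encoder), so it admits a left inverse $E^\dagger$ satisfying $E^\dagger E = \mathrm{id}_k$ on the logical space. Given any logical ZX diagram $L$ on $k$ qubits, I would simply \emph{define} the physical implementation to be the composite $P \coloneqq E L E^\dagger$, drawn as the ZX diagram obtained by stacking $E$, then $L$, then $E^\dagger$. The desired equation then follows formally, since $P E = E L E^\dagger E = E L$. The content of the proposition is therefore not the existence of \emph{some} $P$ (which is immediate by this formal argument) but rather that $P$ can be written as a concrete ZX diagram and that the passage $EL \rightsquigarrow PE$ is realized by genuine \emph{rewrite rules} of the calculus, making it effective and bidirectional.

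Accordingly, my main line of attack is constructive and diagrammatic rather than appealing to the abstract left inverse. First I would put $E$ into the ZX normal form of \Cref{def:normalform}, so that each logical input wire feeds into a $Z$ spider (representing some $\overline{X_j}$) whose legs fan out to the physical $X$ spiders. Then I would argue that it suffices to treat the case where $L$ is a single generator of the calculus, since any ZX diagram is a composite (sequential and parallel) of spiders and wires, and the rewrite extends compositionally: if $E L_1 = P_1 E$ and $E L_2 = P_2 E$ with $L_1, L_2$ stacked, one combines them. For the base cases I would reuse the pushing moves already established in the excerpt: the earlier lemma shows how a logical $\overline{X_i}$ (an $X$ spider with phase $\pi$) pushes through $E$ via the $\pi$-copy rule to yield the transversal $\overline{\mathcal{X}_i}$, and the dual argument via the XZ normal form handles $\overline{Z_i}$. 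A logical $Z$ spider of arbitrary arity can be absorbed into, or copied through, the normal-form $Z$ spiders using spider fusion and copy rules; phases on a logical spider pass through unchanged; and Hadamard or colour-change generators are handled by the corresponding colour-change rules. Crucially, every one of these moves is an equality in the calculus, so chaining them across the composite $L$ produces the claimed $P$ together with an explicit derivation.

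The step I expect to be the main obstacle is establishing the \emph{generality} uniformly, namely that an arbitrary logical $L$ — not just Paulis or single spiders — can always be pushed across by finitely many rewrite rules, and that the resulting $P$ is well defined independently of how $L$ is decomposed into generators. Completeness of the ZX calculus guarantees that $P \coloneqq E L E^\dagger$ \emph{has} a diagrammatic derivation, but to keep the argument self-contained and to justify the word ``bidirectional'' I would want the pushing to be a local, syntactic procedure. The cleanest resolution is to observe that since the normal form encodes the stabilizer and logical structure explicitly, pushing a logical generator through amounts to commuting it past the fan-out $Z$ spiders, and the rules of \Cref{fig:zxrules} and \Cref{fig:zxadditionalrules} suffice to perform this commutation for each generator type; the bidirectionality then comes for free because every rewrite rule used is itself an equality and hence reversible. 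I would remark that $P$ need not be transversal in general — the spider fusion can entangle physical qubits — which is exactly the caveat flagged in the introduction, and this is what distinguishes the present statement from the preceding transversality lemma for bare Pauli logicals.
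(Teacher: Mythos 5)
Your formal argument is sound as far as it goes: $P \coloneqq E L E^\dagger$ is a ZX diagram, and $PE = E L E^\dagger E = EL$ since $E^\dagger E = I$, so the literal statement follows in one line. This is genuinely different from the paper, which never uses the left inverse in the proof itself (it invokes $E^\dagger E = I$ only in the remark afterwards, to relate $EL=PE$ to $L = E^\dagger P E$); what the trivial argument buys you is existence, and what it loses is exactly what the paper wants, namely a local, syntactic pushing procedure. Your constructive plan is then essentially the paper's: put $E$ in ZX (or XZ) normal form, reduce to single spiders by compositionality, and push a phase-free X (resp.\ Z) spider through the fan-out spider of $\overline{X}_i$ (resp.\ $\overline{Z}_i$) by strong complementarity, so that it reappears on all physical qubits in the support of the corresponding logical operator.

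However, two of your base cases are wrong, and they are precisely the cases the paper's \emph{first} step is designed to eliminate. First, ``phases on a logical spider pass through unchanged'' is false: strong complementarity and the copy rules apply only to phase-free spiders, so a $Z(\alpha)$ spider cannot be commuted past the fan-out spider at all; and semantically, a logical $Z(\alpha)$ for generic $\alpha$ has no implementation in which $\alpha$ reappears on the physical wires --- on the Steane code, for instance, its pushed image is the phase gadget $e^{-i\alpha\overline{Z}/2}$ supported on qubits $1,4,5$, with $\alpha$ sitting on a single spider hanging off an X spider, not on the code qubits. This is exactly why the paper's proof \emph{begins} by unfusing every interface spider that is not phase-free or has more than one external wire: only phase-free spiders with a single external wire are ever pushed through $E$, and the unfused phase-carrying remnant is dragged along on that external wire, landing on the physical side as a gadget. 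Second, Hadamard generators are not ``handled by the colour-change rules'': pushing an H box into the normal form flips the colour of the fan-out spider and sprays H boxes onto all of its internal legs, and this cascade does not terminate in anything of the form $PE$ for a general (non-self-dual) CSS code; the correct treatment is to Euler-decompose H into three phased spiders and then apply the corrected phased-spider case. Finally, the well-definedness of $P$ independently of the decomposition of $L$, which you flag as the main obstacle, is not needed at all: the proposition asserts only the existence of \emph{some} $P$ with $EL = PE$, and any two candidates automatically agree on the code space.
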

\begin{proof}
We proceed as follows. First, unfuse all spiders on the logical qubit wires of $L$, whenever they are not phase-free or have more than one external wire:
    \[
    \tikzfig{figs/fab/pushanyspider}.
    \]
    
    For each X (or Z) spider on the logical qubit wire, rewriting $E$ to be in ZX (or XZ) normal form and applying the strong complementarity (sc) rule yields:
    \[
    \tikzfig{figs/fab/pushgeneralXZ}.
    \]
    On the left-hand side, a phase-free X (or Z) spider acts on the $i$-th logical qubit; on the right-hand side, phase-free X (or Z) spiders act on all physical qubits wherever $\overline{X}_i$ (or $\overline{Z}_i$) has support. Therefore, any type of $L$ can be pushed through $E$, resulting in a diagram $P$ which satisfies $EL = PE$.
\end{proof}

In \cite{GarvieL2017verifying832}, it was proved that a physical implementation $P$ of a logical operator $L$ satisfies $L = E^{\dagger} P E$. This is implied by $EL = PE$ as $E^{\dagger} E = I$.

  \section{Graphical Morphing of CSS Codes}
\label{sec:morphing}

One way to transform CSS codes is known as \emph{code morphing}. It provides a systematic framework to construct new codes from an existing code while preserving the number of logical qubits in the morphed code. Here, we present this procedure through the rewrites of the encoder diagram using the ZX calculus. Let us start by revisiting the code morphing definition in~\cite{VasmerM2022morphingcodes}.

\begin{definition}
\label{def:codemorphing}
Let $\mathcal{S}$ be a stabilizer group and $\mathcal{C}$ be its joint $+1$ eigenspace. $\mathcal{C}$ is called the \emph{parent code}. Let $Q$ denote the set of physical qubits of $\mathcal{C}$ and $R \subseteq Q$. Then $\mathcal{S}(R)$ is a subgroup of $\mathcal{S}$ generated by all stabilizers of $\mathcal{S}$ that are fully supported on $R$. Let $\mathcal{C}(R)$ be the joint $+1$ eigenspace of $\mathcal{S}(R)$, and $\mathcal{C}(R)$ is called the \emph{child code}. Given the parent code encoder $E_{\mathcal{C}}$, concatenate it with the inverse of the child code encoder $E^\dagger_{\mathcal{C}(R)}$. This gives the \emph{morphed code} $\mathcal{C}_{\setminus R}$.
\end{definition}

\Cref{fig:codetransformvsmorphing} provides two equivalent interpretations for the code morphing process. In \cref{fig:morphcircuit}, \Cref{def:codemorphing} is depicted by the circuit diagram. Since $E_{\mathcal{C}(R)}$ is an isometry, $E^\dagger_{\mathcal{C}(R)}E_{\mathcal{C}(R)}=I$. By construction, the equation shown in \cref{fig:morphcircuit} holds~\cite{VasmerM2022morphingcodes}. Moreover, the parameters of $\mathcal{C}=\llbracket n,k,d\rrbracket$, $\mathcal{C}(R)=\llbracket n_1,k_1,d_1 \rrbracket$, and $\mathcal{C}_{\setminus R}=\llbracket n_2,k_2,d_2 \rrbracket$ are characterized below. Let $m, m_1, m_2$ be the number of stabilizer generators for $\mathcal{C}$, $\mathcal{C}(R)$, and $\mathcal{C}_{\setminus R}$ respectively. Then
\[
n_2=n-n_1+k_1, \quad k_2 = k, \quad m_2 = (n-k)-(n_1-k_1) = m - m_1, \quad d_1, d_2 \in \N.
\]

\Cref{fig:morphzx} provides a concrete example of applying \Cref{def:codemorphing} to the $\steanecode$ Steane code, where $S = \{1,2,3,4,5,6,7\}$ and $R = \{2,3,6,7\}$. As a result, the $\fivecode$ code is morphed from the parent code along with the $\squarecode$ child code. This morphed code inherits a fault-tolerant implementation of the Clifford group from the $\steanecode$ code, which has a transversal implementation of the logical Clifford operators. This morphing process is represented in the ZX diagram by cutting the edges labelled by $\overline{1}$ and $\overline{2}$ adjacent to the X spider. This is equivalent to concatenating the ZX diagram of $E^\dagger_{\squarecode}$ in \cref{fig:morphcircuit}.

\begin{figure}[H]
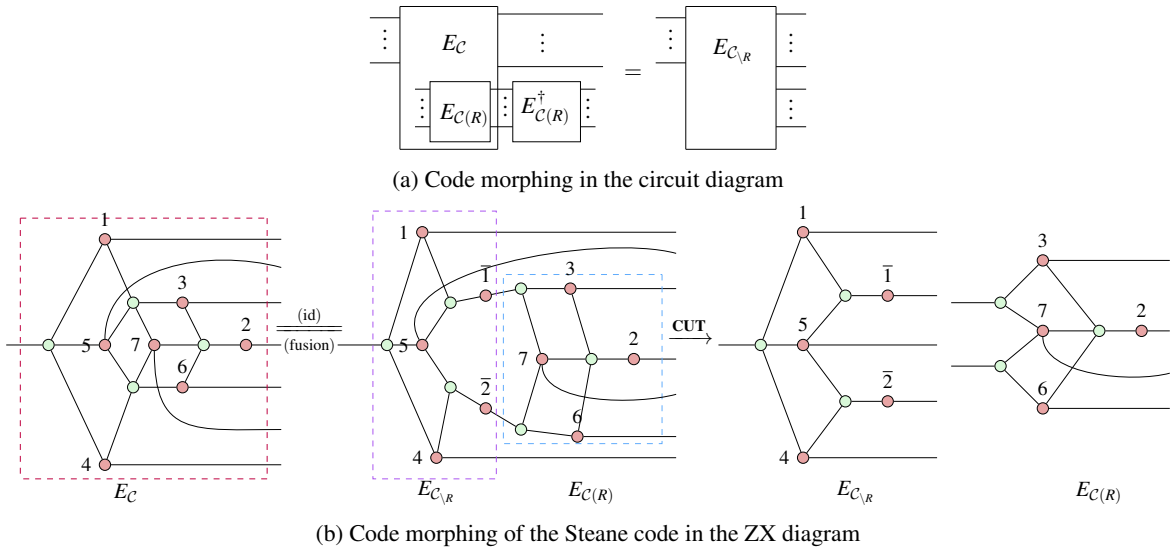

     \centering
     \begin{subfigure}[b]{1\textwidth}
         \centering
         \scalebox{.8}{\tikzfig{figs/morph/codemorph}}
         \caption{Code morphing in the circuit diagram}
         \label{fig:morphcircuit}
     \end{subfigure}
     \hfill
     \begin{subfigure}[b]{1\textwidth}
         \centering
         \scalebox{.75}{\tikzfig{figs/morph/morphingsteanezx}}
         \caption{Code morphing of the Steane code in the ZX diagram}
         \label{fig:morphzx}
     \end{subfigure}
     \caption{Code morphing can be visualized using both circuit and ZX diagrams. In \cref{fig:morphcircuit}, code morphing is viewed as a concatenation of the parent code encoder $E_{\mathcal{C}}$ and the inverse of the child code encoder $E^\dagger_{\mathcal{C}(R)}$. In \Cref{fig:morphzx}, the encoder $E_\mathcal{C}$ of the Steane code is represented in the ZX normal form. As described in \Cref{proc:codemorphzx}, by applying ZX rules (id) and (fusion) in \cref{fig:zxrules}, we can perform code morphing by bipartitioning it into the encoder $E_{\mathcal{C}_{\setminus R}}$ of the morphed code $\mathcal{C}_{\setminus R} = 
     \fivecode$, and the encoder $E_{\mathcal{C}(R)}$ of the child code $\mathcal{C}(R) = \squarecode$.}
        \label{fig:codetransformvsmorphing}
\end{figure}

Next, we generalize the notion of code morphing and show how ZX calculus could be used to study these relations between the encoders of different CSS codes. More precisely, we provide an algorithm to morph a new CSS code from an existing CSS code.

\begin{procedure}
\label{proc:codemorphzx}
    Given a parent code $\mathcal{C}$ and a child code $\mathcal{C}(R)$ satisfying \Cref{def:codemorphing}, construct the encoder of $\mathcal{C}$ in the \ZXNF. Then the code morphing proceeds as follows:
    \begin{enumerate}[label={(\alph*)}]
        \item Unfuse every Z spider which is supported on $c$ qubits within $R$ and $f$ qubits outside $R$, $c \neq 0$, $f \neq 0$. 
        \item Add an identity X spider between each pair of Z spiders being unfused in step \textit{(a)}.
        \item Cut the edge between every identity X spider and the Z spiders supported on the $f$ qubits in $R$.
    \end{enumerate}
\end{procedure}
It follows that the subdiagram containing $R$ corresponds to the ZX normal form of $E_{\mathcal{C}(R)}$. It has the same number of X spiders as R, so $n_1=\lvert R\rvert$. Suppose that there are $h$ Z spiders being unfused. Then $h$ must be bounded by the number of Z spiders in the ZX normal form of $E_{\mathcal{C}}$. As each spider unfusion introduces a logical qubit to $\mathcal{C}(R)$, $k_1 = h$.  On the other hand, the complement subdiagram contains $n-n_1 + k_1$ X spiders as each edge cut introduces a new X spider into the complement subdiagram. It also contains $k$ logical qubits as the input edges in the ZX normal form of $E_{\mathcal{C}}$ are invariant throughout the spider-unfusing and edge-cutting process. This gives the ZX normal form for the encoder of the morphed code $\mathcal{C_{\setminus R}}=\llbracket n_2,k_2,d_2 \rrbracket$, where $n_2 = n-n_1 + k_1$, $k_2 = k$, $d_2\in \N$. As a result, the ZX normal form of $E_{\mathcal{C}}$ is decomposed into the ZX normal forms of $E_{\mathcal{C}(R)}$ and $E_{\mathcal{C_{\setminus R}}}$ respectively.

As the XZ and ZX normal forms are equivalent for CSS codes, \Cref{proc:codemorphzx} can be carried out for the XZ normal form by inverting the roles of Z and X at each step.

Here, we exemplify the application of \Cref{proc:codemorphzx} by morphing two simple CSS codes. Unlike \Cref{fig:morphzx}, \Cref{exm:1} chooses a different subset of qubits, $R=\{4,5,6,7\}$, to obtain the $\llbracket 6,1,1 \rrbracket$ morphed code. In \Cref{exm:2}, we visualize the $\tencode$ code morphing from the $\fifteencode$ quantum Reed-Muller code. The $\tencode$ code is interesting because it inherits a fault-tolerant implementation of the logical $T$ gate from its parent code, which has a transversal implementation of the logical $T$ gate.

\begin{example}
    Let the parent code $\mathcal{C}$ be the Steane code and the child code be $C(R) = \llbracket 4,3,1 \rrbracket$. By \Cref{proc:codemorphzx}, we obtain the morphed code $\mathcal{C}_{\setminus R}=\llbracket 6,1,1 \rrbracket$. Note that for $C(R)$, there is one X-type stabilizer generator and no Z-type stabilizer generator. This means that $C(R)$ cannot detect a single-qubit $X$ error, so it has a distance of $1$. In $\mathcal{C}_{\setminus R}$, the physical qubit labelled $\overline{3}$ is not protected by any X-type stabilizer. Therefore, $\mathcal{C}_{\setminus R}$ is of distance $1$.
        \begin{figure}[H]
        \centering
        \scalebox{.8}{\tikzfig{figs/morph/morphingsteanezxdiff2}}
    \end{figure}
    \label{exm:1}
\end{example}

\begin{example}
    Let the parent code $\mathcal{C}$ be the quantum Reed-Muller and the child code be $\mathcal{C}(R)=\cubecode$. By \Cref{proc:codemorphzx}, we obtain the morphed code $\mathcal{C}_{\setminus R}=\tencode$. For brevity, the X spiders representing physical qubits and the logical qubit wires inputting to the Z spiders are omitted.
    \begin{figure}[H]
        \centering
        \scalebox{.95}{\tikzfig{figs/morph/morphing15qrm}}
    \end{figure}
    \label{exm:2}
\end{example}

 \section{Graphical Code Switching of CSS Codes}
\label{sec:codeswitching}

Another way to transform CSS codes is known as \emph{code switching}. It is a widely studied technique in quantum error correction. Codes with complementary fault-tolerant gate sets are switched between each other to realize a universal set of logical operations. As a case study, we focus on the code switching protocol between the Steane code and the quantum Reed-Muller code~\cite{anderson2014fault,paetznick2013univftqctransversal,QuanDX2018gaugefixRMconvert}. Since this process is bidirectional, the reasoning for one direction can be simply adjusted for the opposite direction. Recall in \Cref{lem:encodedstate}, we showed that the extended Steane code is equivalent to the Steane code up to some auxiliary state. In what follows, we focus on the \emph{backward switching} from the quantum Reed-Muller code to 
the extended Steane code.

Using the ZX calculus, we provide a graphical interpretation for the backward code switching. More precisely, it is visualized as gauge-fixing the $\fifteensub$ subsystem code, followed by a sequence of syndrome-determined recovery operations.

We first characterize the relations between the quantum Reed-Muller code, the extended Steane code, and the $\fifteensub$ subsystem code. For brevity, we denote these codes as $\cqrm, \ \cex$ and $\csub$, and their respective encoders as $\eqrm, \ \eex$, and $\esub$.

\begin{lemma}
\label{lem:reduce}
When the three gauge qubits are in the $\ket{\overline{+++}}$ state, $\csub$ is equal to $\cex$, as shown in \Cref{fig:gaugefixingcircuit}.
\end{lemma}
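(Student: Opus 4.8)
The plan is to prove the equality at the level of encoders, using the ZX normal form for subsystem codes (\Cref{def:subsystemnormalform}) on the left and the stabilizer-code normal form (\Cref{def:normalform}) on the right. Concretely, I would first draw $\esub$ in its ZX normal form: one X spider per physical qubit, a Z spider for each of the four X-type stabilizer generators in $F^X$, a Z spider for the single logical operator $\overline{X}$ carrying an input wire, and a Z spider for each of the three X-type gauge operators $\LGX{t}$ in $H^X$, each carrying an input wire feeding the joint gauge input state. The target diagram $\eex$ is the normal form of the stabilizer code $\mathcal{S}_{ex} = \langle F^X, F^Z, H^X, H^Z\rangle$, whose X-type stabilizers are exactly $F^X \cup H^X$ and whose single logical operator is $\overline{X}$.

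The key observation is that preparing the three gauge qubits in $\ket{\overline{+++}}$ amounts to feeding a $\ket{+}$ into each of the three gauge-operator input wires, and that in the ZX calculus $\ket{+}$ is a phase-free single-legged Z spider. I would then apply the spider-fusion rule from \Cref{fig:zxrules}: each such $\ket{+}$ fuses into the corresponding gauge-operator Z spider, deleting its input wire. The result is that every $\LGX{t}$ Z spider is now connected only to the X spiders on which $H^X_t$ has support, with no input wire --- precisely the shape of an X-type stabilizer generator in \Cref{def:normalform}. Hence the three former gauge operators are promoted to stabilizer generators, and the diagram becomes the ZX normal form built from the seven X-type stabilizers $F^X \cup H^X$ together with the logical operator $\overline{X}$, i.e.\ exactly $\eex$ (up to a global normalization scalar, which does not affect the code).

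The main obstacle, and the step deserving the most care, is justifying the identification above: that ``the gauge qubits in $\ket{\overline{+++}}$'' corresponds to feeding $\ket{+}$ into the gauge-operator input wires, rather than some other state. This rests on the fact that, in the subsystem normal form, each Z spider representing $\LGX{t} = H^X_t$ plays the role of the logical-X operator of the $t$-th gauge qubit, so its $+1$ eigenstate --- the logical $\ket{\overline{+}}$ --- is exactly what a $\ket{+}$ on the input wire injects. Equivalently, this is the statement that gauge-fixing in the X basis adds $H^X$ to $\mathcal{S}_{sub}$, and indeed $\langle F^X, F^Z, H^Z, H^X\rangle = \mathcal{S}_{ex}$, matching the stabilizer-group computation of \Cref{eq:stab-sub} and \Cref{eq:stab-qrm}. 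Once this correspondence is fixed, the remainder is a single application of spider fusion, and the diagrammatic equality displayed in \Cref{fig:gaugefixingcircuit} follows.
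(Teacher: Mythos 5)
Your proof is correct and is essentially the paper's own argument: the paper's alternative derivation represents $\esub$ in the ZX normal form and applies the (fusion) rule so that the gauge-operator Z spiders $\LGX{1}, \LGX{2}, \LGX{3}$ are identified with the stabilizers $\SX{5}, \SX{6}, \SX{7}$ of $\cex$, exactly as you do by fusing the $\ket{+}$ inputs into the gauge spiders. The only cosmetic difference is that the paper's first-listed proof runs the mirror-image argument in the XZ normal form (Z-type stabilizers, gauge operators, and logical $\overline{Z}$), but the content is the same.
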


\begin{figure}[H]
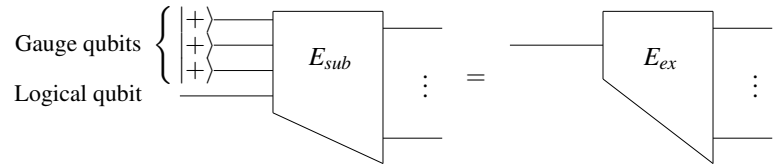

\centering
\scalebox{.9}{\tikzfig{figs/conv/proofoverview1}}
\caption{$\csub$ is equivalent to $\cex$ up to a fixed state of gauge qubits.}
\label{fig:gaugefixingcircuit}
\end{figure}

\begin{proof}
According to \cref{def:subsystem}, represent $\esub$ in the XZ normal form, with Z-type stabilizer generators $\SZ{i}$, Z-type gauge operators $\LGZ{j}$, and one logical Z operator $\overline{Z}$, $\ 1 \leq i \leq 7, \ 1 \leq j \leq 3$. After applying a sequence of rewrite rules, we obtain exactly the XZ normal form for $\eex$.
$$\scalebox{0.58}{\tikzfig{figs/conv/eqrm4-0}}.$$

Alternatively, if one chooses to represent $\esub$ in the ZX normal form, the proof proceeds by applying the (fusion) rule to the Z spiders and identifying the gauge operators $\LGX{1}, \ \LGX{2}, \ \LGX{3}$ of $\csub$ as the stabilizers $\SX{5}, \ \SX{6}, \ \SX{7}$ of $\cex$, respectively:
$$\scalebox{0.59}{\tikzfig{figs/conv/eqrm4-0zx}}.$$
\end{proof}

\begin{corollary}
    When the three gauge qubits are in the $\ket{\overline{000}}$ state, $\csub$ is equal to $\cqrm$.
\end{corollary}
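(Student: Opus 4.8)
The plan is to obtain this as the $\ket{\overline 0}$-basis dual of \Cref{lem:reduce}. There, fixing the gauge qubits to $\ket{\overline{+++}}$ projected onto the $+1$ eigenspace of the X-type gauge operators $H^X$, promoting them to stabilizers and recovering $\cex$. Here, fixing to $\ket{\overline{000}}$ instead projects onto the $+1$ eigenspace of the Z-type gauge operators $J^Z$. At the level of stabilizer groups this promotes $J^Z$ from gauge operators to stabilizers, turning $\mathcal{S}_{sub}=\langle F^X, F^Z, H^Z\rangle$ into $\langle F^X, F^Z, H^Z, J^Z\rangle = \mathcal{S}_{qrm}$, so the target code is indeed $\cqrm$; the task is to reproduce this diagrammatically.

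First I would represent $\esub$ in the XZ normal form of \Cref{def:subsystemnormalform}, so that the three Z-type gauge operators $\LGZ{1}, \LGZ{2}, \LGZ{3}$ coming from $J^Z$ appear as X spiders carrying the gauge-qubit input wires, while $F^Z$ and $H^Z$ appear as X spiders without inputs and the single logical $\overline{Z}$ keeps its input wire. The key observation is that preparing a gauge qubit in $\ket{\overline 0}$ amounts to feeding $\ket 0$, that is, a single-legged phase-free X spider, into the corresponding gauge input wire.

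The central step is then a single application of (fusion): each $\ket 0$ red dot meets a gauge X spider of the same colour and fuses into it, deleting that input wire. After the fusions the former gauge spiders are indistinguishable from stabilizer X spiders, and the diagram is the XZ normal form of an encoder whose Z-type generators are $\langle F^Z, H^Z, J^Z\rangle$ and whose X-type generators are $F^X$. I would then check that the logical $\overline{Z}$ wire and the X-type spiders $F^X$ are untouched by the fusions, so that the remaining diagram is exactly $\eqrm$ and no logical information is disturbed.

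I do not expect a genuine obstacle, since the statement is a symmetric dual of \Cref{lem:reduce}; the step most worth getting right is the basis bookkeeping. One must confirm that $\ket{\overline{000}}$ feeds red ($\ket 0$) dots into the XZ-normal-form gauge wires so that (fusion) applies, rather than green dots which would not fuse. This is the dual of why \Cref{lem:reduce} used the ZX normal form for its one-step (fusion) computation: the roles of the two normal forms are swapped, so here the XZ normal form gives the single fusion step, whereas repeating the argument in the ZX normal form would require the longer sequence of rewrites dual to the first computation in that proof. Verifying that the resulting generators coincide with those of $\mathcal{S}_{qrm}$ in \Cref{eq:stab-qrm} completes the argument.
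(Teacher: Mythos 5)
Your proof is correct and is essentially the paper's intended argument: the paper gives no separate proof of this corollary, treating it as the dual of \Cref{lem:reduce}, and your route --- representing $\esub$ in the XZ normal form, feeding red $\ket{0}$ states into the gauge wires so a single (fusion) step deletes those inputs and promotes $J^Z$ to stabilizers, turning $\mathcal{S}_{sub}$ into $\mathcal{S}_{qrm} = \bigl\langle F^X, F^Z, H^Z, J^Z \bigr\rangle$ --- is exactly that dual, including your correct observation that the roles of the two normal forms swap relative to the lemma (here XZ gives the one-step fusion, ZX would need the longer rewrite sequence). One terminological slip to fix: in the XZ normal form the X-type generators $F^X$ are not drawn as spiders at all (only Z-type operators appear, as X spiders), so the final check is simply that fusion leaves the remaining connectivity and the logical $\overline{Z}$ input untouched, whence the fused diagram is literally the XZ normal form of $\eqrm$.
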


In \cite{anderson2014fault,paetznick2013univftqctransversal}, code switching is described as a \emph{gauge fixing} process. Further afield, \cite{vuillot2019code} provides a generic recipe to gauge-fix a CSS subsystem code. Here, we generalize \Cref{lem:reduce} and describe how to gauge-fix $\csub$ to $\cex$ using the ZX calculus. 

\begin{proposition}
\label{prop:gaugefixsub}
Gauge-fixing $\csub$ in the following steps results in $\cex$, as shown in \Cref{fig:generalcir}.

\begin{enumerate}[label={(\alph*)}]
    \item Measure three X-type gauge operators $L_{g_i}^X$ and obtain the corresponding outcomes $k_1,k_2,k_3\in \Z_2$.
    \item When $k_i = 1$, the gauge qubit $i$ has collapsed to the wrong state $\ket{\overline{-}}$. Apply the Z-type recovery operation $L_{g_i}^Z$.
\end{enumerate}
\begin{figure}[H]
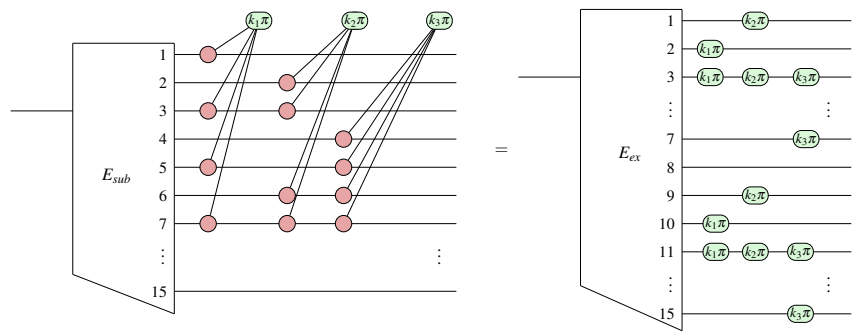

\centering
    \scalebox{.6}{\tikzfig{figs/conv/proofoverview3}}
    \caption{Gauge-fixing $\csub$ to $\cex$ in the circuit diagram.}
    \label{fig:generalcir}
\end{figure}
\end{proposition}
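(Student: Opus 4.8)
The plan is to bootstrap from \Cref{lem:reduce}, which already settles the ideal case in which all three gauge qubits sit in $\ket{\overline{+++}}$; the proposition then reduces to showing that the measure-and-recover protocol of steps \textit{(a)}--\textit{(b)} deterministically prepares that state whatever the random syndromes $k_1,k_2,k_3$ turn out to be. First I would draw $\esub$ in the subsystem ZX normal form of \Cref{def:subsystemnormalform}, so that each X-type gauge operator $\LGX{i}$ owns a gauge-qubit wire fed by the free input $\rho$. Measuring $\LGX{i}$ on the physical qubits is, modulo stabilizers, a logical $X$-basis measurement on gauge qubit $i$, so diagrammatically it plugs that wire with the effect $\bra{\overline{+}}$ when $k_i=0$ and $\bra{\overline{-}}$ when $k_i=1$; equivalently it replaces $\rho$ by the fixed state $\ket{\overline{+}}$ or $\ket{\overline{-}}$. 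In the ZX calculus this fixed state is a single X spider, carrying a $\pi$ phase precisely when $k_i=1$, since $\ket{\overline{-}}$ differs from $\ket{\overline{+}}$ by that phase.

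Next I would analyse the conditional recovery. When $k_i=1$ we apply $\LGZ{i}$, and I would invoke \Cref{prop:pushenc} to push this Z-type Pauli leftward through the encoder: because $\LGZ{i}$ anticommutes with $\LGX{i}$ yet commutes with every stabilizer generator and with the remaining gauge pairs, it pushes to a logical $\overline{Z}$ supported only on gauge qubit $i$'s input wire, i.e., a lone $\pi$-phase spider on that wire. The ($\pi$-copy) and (fusion) rules of \Cref{fig:zxrules} then let this $\pi$ cancel the $\pi$ carried by the $\ket{\overline{-}}$ state, so after recovery gauge qubit $i$ is deterministically in $\ket{\overline{+}}$. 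Once all three wires read $\ket{\overline{+++}}$, \Cref{lem:reduce} immediately gives $\csub=\cex$, which is exactly the equivalence depicted in \Cref{fig:generalcir}.

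The main obstacle will be justifying that the three measure-and-recover steps act independently and may be carried out qubit by qubit in any order. This hinges on the algebraic structure guaranteed by \Cref{def:subsystem}: the gauge operators decompose into independent anticommuting pairs $\{\LGX{i},\LGZ{i}\}$, with distinct pairs commuting and each commuting with $\mathcal{S}$. I would make these (anti)commutation relations explicit from the definitions of $\mathcal{S}_{sub}$ and $\mathcal{G}$ in \Cref{eq:stab-sub}, checking the overlaps of $H^X$ with $J^Z$, so that a recovery $\LGZ{i}$ neither flips the outcome of a later measurement $\LGX{j}$ for $j\neq i$ nor disturbs any code stabilizer. With this in hand, the diagrammatic cancellation above can be applied one gauge qubit at a time, yielding the claimed deterministic preparation of $\ket{\overline{+++}}$ and hence the reduction to $\cex$.
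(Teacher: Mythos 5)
Your proposal is correct and follows essentially the same route as the paper's proof: represent $\esub$ in normal form, realize the measurement outcomes as $k_i\pi$ phases on the gauge-qubit wires, use \cref{prop:pushenc} to identify the physical recovery $\LGZ{i}$ with a logical $Z$ on gauge wire $i$ so that the two $\pi$ phases cancel, and then conclude with \cref{lem:reduce}. The only cosmetic difference is that you push the recovery operations leftward through the encoder, whereas the paper pushes the measurement phases rightward through the intermediate $\intcode$ encoder (after switching it to XZ normal form); since the rewrite of \cref{prop:pushenc} is bidirectional, these amount to the same step.
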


\begin{proof}
By \cref{def:subsystemnormalform}, construct the ZX normal form of $\esub$ in the blue dashed box of (i). Then the three gauge operators $L_{g_i}^X$ are measured in step (a). The subsequent equalities follow from \cref{fig:zxrules,fig:zxadditionalrules}. Next, we observe that the purple dashed box in (iii) is exactly the encoder of the $\intcode$ stabilizer code. By Lemma 3.2 in \cite{KissingerA2022phasefreeCSS}, it can be equivalently expressed in the XZ normal form, as in (iv). By \Cref{prop:pushenc}, pushing each Z spider with the phase $k_i\pi$ across $E_{\intcode}$ results in (v). In step (b), Pauli Z operators are applied based upon the measurement outcome $k_i$, which corresponds to the recovery operations in the red dashed box of (v). After that, the gauge qubits of $\csub$ are set to the $\ket{\overline{+++}}$ state. By \Cref{lem:reduce}, we obtain the XZ normal form for $\eex$, as shown in the orange dashed box of (vi). Therefore, the equation in \Cref{fig:generalcir} holds.
\end{proof}
\[
\scalebox{.56}{\tikzfig{figs/conv/proofdetails1}}
\]

We sum up by explaining how to obtain $\cex$ and $\cqrm$ by gauge-fixing $\csub$. In \cref{prop:gaugefixsub}, we showed that measuring the X-type gauge operators $L_{g_i}^X$ followed by the Z-type recovery operations $L_{g_i}^Z$ is equivalent to adding $L_{g_i}^X$ to the stabilizer group $\mathcal{S}_{sub}$. This results in the formation of $\cex$. Analogously, measuring the Z-type gauge operators $L_{g_i}^Z$ followed by the X-type recovery operations $L_{g_i}^X$ is equivalent to adding $L_{g_i}^Z$ to $\mathcal{S}_{sub}$. Thus, we obtain $\cqrm$. 

Alternatively, gauge-fixing $\csub$ can be viewed as a way of switching between $\cex$ and $\cqrm$~\cite{anderson2014fault,QuanDX2018gaugefixRMconvert}. As an example, in \Cref{fig:alt}, we visualize the measurement of $L_{g_1}^X\coloneqq X_1X_3X_5X_7$ in order to switch from $\cqrm$ to  $\cex$. The effect of measuring other X-type gauge operators can be reasoned analogously. 

By \cref{def:subsystemnormalform}, construct the XZ normal form of $\eqrm$ in (i). Then measure $L_{g_1}^X$ and apply a sequence of rewrite rules to the ZX diagram. In (v), the stabilizer $L_{g_1}^Z\coloneqq Z_2Z_3Z_{10}Z_{11}$ is removed from the stabilizer group $\mathcal{S}_{qrm}$. Meanwhile, the recovery operation can be read off from the graphical derivation: $(Z_{2}Z_{3}Z_{10}Z_{11})^{k_1}=\left(L_{g_1}^Z\right)^{k_1}$, $k_1 \in \Z_2$.

\begin{figure}
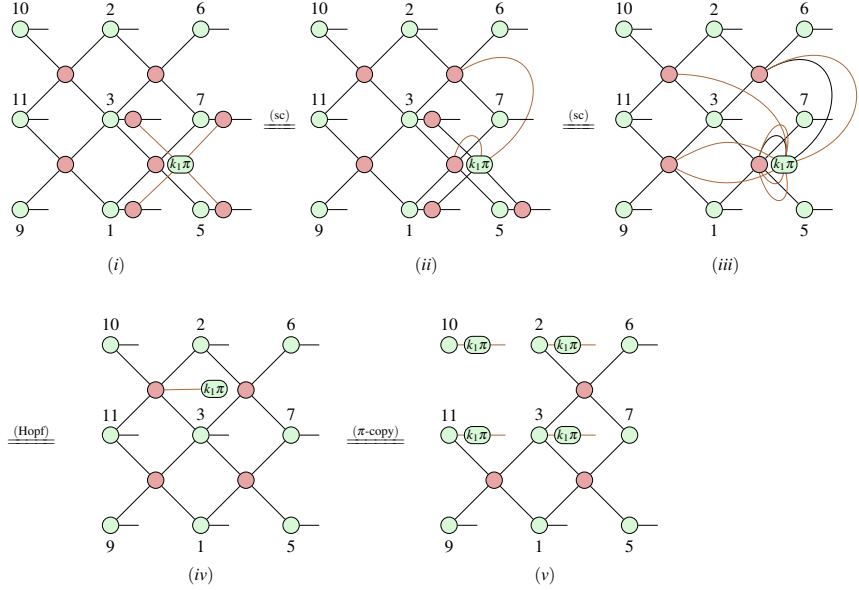

    \centering
    \scalebox{.6}{\tikzfig{figs/conv/alt}}
    \caption{The switching from $\cqrm$ to $\cex$ provides an alternative interpretation of \Cref{prop:gaugefixsub}. After measuring $L_{g_1}^X$, $L_{g_1}^Z$ is removed from the stabilizer group $\mathcal{S}_{qrm}$ and the recovery operation is performed based on the measurement syndrome. Note that unrelated X and Z spiders are omitted from the ZX diagrams.}
    \label{fig:alt}
\end{figure}

Overall, ZX visualization provides a deeper understanding of the gauge fixing and code switching protocols. On top of revealing the relations between different CSS codes' encoders, it provides a simple yet rigorous test for various fault-tolerant protocols. Beyond this, it will serve as an intuitive guiding principle for the implementation of various logical operations.

\section{Conclusion}
\label{sec:conclude}

In this paper, we generalize the notions in \cite{KissingerA2022phasefreeCSS} and describe a normal form for CSS subsystem codes. Built upon the equivalence between CSS codes and the phase-free ZX diagrams, we provide a bidirectional rewrite rule to establish a correspondence between a logical ZX diagram and its physical implementation. With these tools in place, we provide a graphical representation of two code transformation techniques: code morphing, a procedure that transforms a code through unfusing spiders for the stabilizer generators, and gauge fixing, where different stabilizer codes can be obtained from a common subsystem code. These explicit graphical derivations show how the ZX calculus and graphical encoder maps relate several equivalent perspectives on these code transforming operations, allowing potential utilities of ZX to simplify fault-tolerant protocols and verify their correctness.

Looking ahead, many questions remain. It is still not clear how to present the general code deformation of CSS codes using phase-free ZX diagrams. Besides, understanding code concatenation through the lens of ZX calculus may help derive new and better codes. In addition, it would be interesting to look at other code modification techniques derived from the classical coding theory \cite{macwilliams1977theory}.

\newpage
 \section{Acknowledgement}
\label{sec:acknowledgement}
The authors would like to thank Thomas Scruby for enlightening discussions. SML and MM  wish to thank NTT Research for their financial and technical support. This work was supported in part by Canada’s NSERC. Research at IQC is supported in part by the Government of Canada through Innovation, Science and Economic Development Canada. 
Research at Perimeter Institute is supported in part by the Government of Canada through the Department of Innovation, Science and Economic Development Canada and by the Province of Ontario through the Ministry of Colleges and Universities.
LY is supported by an Oxford - Basil Reeve Graduate Scholarship at Oriel College with the Clarendon Fund.

\bibliographystyle{eptcs}
\bibliography{generic}

\appendix

\end{document}